\documentclass[twoside]{article}%
\usepackage{amssymb}
\usepackage{amsfonts}
\usepackage{amsmath}
\usepackage{graphicx}%
\setcounter{MaxMatrixCols}{30}
%TCIDATA{OutputFilter=latex2.dll}
%TCIDATA{Version=5.50.0.2953}
%TCIDATA{LastRevised=Thursday, May 20, 2010 17:52:05}
%TCIDATA{<META NAME="GraphicsSave" CONTENT="32">}
%TCIDATA{<META NAME="SaveForMode" CONTENT="1">}
%TCIDATA{BibliographyScheme=Manual}
%TCIDATA{Language=American English}
%BeginMSIPreambleData
\providecommand{\U}[1]{\protect\rule{.1in}{.1in}}
%EndMSIPreambleData
\topmargin -0.40in \oddsidemargin 0.08in \evensidemargin 0.08in
\marginparwidth 0.00in \marginparsep 0.00in \textwidth 15cm
\textheight 23cm \pagestyle{myheadings} \markboth{\rm\small Yeung and Yuen}
{\rm\small Line Solutions to Euler and Euler-Poisson Equations}
\arraycolsep=1.5pt
\linespread{2}

\newtheorem{theorem}{Theorem}

\newtheorem{lemma}[theorem]{Lemma}

\newtheorem{remark}[theorem]{Remark}

\newenvironment{proof}[1][Proof]{\noindent\textbf{#1.} }{\ \rule{0.5em}{0.5em}}
\begin{document}

\title{\textbf{Line Solutions for the Euler and Euler-Poisson Equations with Multiple
Gamma Law}}
\author{\textsc{Ling Hei Yeung\thanks{E-mail address: lightisgood2005@yahoo.com.hk}}
\and M\textsc{anwai Yuen\thanks{E-mail address: nevetsyuen@hotmail.com }}\\\textit{Department of Applied Mathematics, The Hong Kong Polytechnic
University,}\\\textit{Hung Hom, Kowloon, Hong Kong}}
\date{Revised 20-May-2010}
\maketitle

\begin{abstract}
In this paper, we study the Euler and Euler-Poisson equations in $R^{N}$, with
multiple $\gamma$-law for pressure function:
\begin{equation}
P(\rho)=e^{s}\sum_{j=1}^{m}\rho^{\gamma_{j}},
\end{equation}
where all $\gamma_{i+1}>\gamma_{i}\geq1$, is the constants. The analytical
line solutions are constructed for the systems. It is novel to discover the
analytical solutions to handle the systems with mixed pressure function. And
our solutions can be extended to the systems with the generalized multiple
damping and pressure function.

Key Words: Multiple Gamma Law, Euler Equations, Euler-Poisson Equations,
Analytical Solutions, Navier-Stokes Equations, Global Solutions, External
forces, Free Boundary, Multiple Damping

\end{abstract}

\section{Introduction}

The $N$-dimensional Euler equations can be formulated as the follows:%
\begin{equation}
\left\{
\begin{array}
[c]{c}%
{\normalsize \rho}_{t}{\normalsize +\nabla\cdot(\rho\vec{u})=}{\normalsize 0,}%
\\
\rho\left[  \vec{u}_{t}+\left(  \vec{u}\cdot\nabla\right)  \vec{u}\right]
+\nabla P(\rho){\normalsize =-}\delta\rho\bigtriangledown\Phi-\rho\vec
{F}(t),\\
S_{t}+{\normalsize \vec{u}\cdot\nabla S}=0,\\
{\normalsize \Delta\Phi(t,\vec{x})=\alpha(N)}{\normalsize \rho,}%
\end{array}
\right.  \label{Euler-Poisson11}%
\end{equation}
with $\vec{x}=(x_{1},x_{2},...,x_{N})\in R^{N},$\newline and $\rho=\rho
(t,\vec{x}),$ $\vec{u}(t,\vec{x})\in R^{N}$ and ${\normalsize S}=S(t)$ are the
density, the velocity and the entropy respectively. And $\alpha(N)$ is a
constant related to the unit ball in $R^{N}$: $\alpha(1)=2$; $\alpha(2)=2\pi$
and for $N\geq3,$%
\begin{equation}
\alpha(N)=N(N-2)Vol(N)=N(N-2)\frac{\pi^{N/2}}{\Gamma(N/2+1)},
\end{equation}
where $Vol(N)$ is the volume of the unit ball in $R^{N}$ and $\Gamma$ is a
Gamma function.

When $\delta=1$, the system can model fluids that are self-gravitating , such
as gaseous stars. For $N=3$, the equations (\ref{Euler-Poisson11}) are the
classical (non-relativistic) descriptions of a galaxy in astrophysics. See
\cite{BT} and \cite{C}, for details about the systems.\newline When
$\delta=-1$, the system is the compressible Euler-Poisson equations with
repulsive forces. The equation (\ref{Euler-Poisson11})$_{4}$ is the Poisson
equation through which the potential with repulsive forces is determined by
the density distribution of the electrons. In this case, the system can be
viewed as a semiconductor model. See \cite{Cse} and \cite{Lions} for detailed
analysis of the system.\newline When $\delta=0$, the potential forces are
ignored. The system is called the Euler equations. See \cite{CW} and \cite{Ni}
for detailed analysis of the system.

Here $P=P(\rho)$ is the pressure, the $\gamma$-law on the pressure for the
single gas, i.e.
\begin{equation}
P(\rho)=e^{s}\rho^{\gamma}, \label{eq2}%
\end{equation}
is a universal hypothesis. The constant $\gamma=c_{P}/c_{v}\geq1$, where
$c_{p}$ and $c_{v}$ are the specific heats per unit mass under constant
pressure and constant volume respectively, is the ratio of the specific heats.
In particular, the fluid is called isothermal if $\gamma=1$. More generally,
the pressure function of mixed gases, can be expressed by the multiple
$\gamma$-law (\cite{LRTZ}, \cite{PV}, and \cite{TS}):%
\begin{equation}
P(\rho)=P_{1}+P_{2}+...P_{N}=e^{s}\sum_{j=1}^{m}\rho^{\gamma_{j}},
\end{equation}
where all $\gamma_{j+1}>\gamma_{j}\geq1$, is the constant, $m$ is a positive
integer.\newline The system with the multiple $\gamma$-law can reflect the
better approximation of the real situations. For example, the fluids in the
stars mix with many types of gases in models of gaseous stars in astrophysics
\cite{TS}.

Here the time-dependent external force $\vec{F}(t)=(F_{1}(t),F_{2}%
(t),...,F_{N}(t))\in C^{0}$ are coupled in the systems.

The system (\ref{Euler-Poisson11}) can be rewritten in scalar form,%
\begin{equation}
\left\{
\begin{array}
[c]{c}%
\frac{\partial\rho}{\partial t}+\sum_{k=1}^{N}u_{k}\frac{\partial\rho
}{\partial x_{k}}+\rho\sum_{k=1}^{N}\frac{\partial u_{k}}{\partial x_{k}%
}{\normalsize =}{\normalsize 0,}\\
\rho\left(  \frac{\partial u_{i}}{\partial t}+\sum_{k=1}^{N}u_{k}%
\frac{\partial u_{i}}{\partial x_{k}}\right)  +\frac{\partial}{\partial x_{i}%
}\left(  e^{S}\sum_{j=1}^{m}\rho^{\gamma_{j}}\right)  {\normalsize =-\rho
}\frac{\delta\partial}{\partial x_{i}}\Phi+\rho F_{i}(t)\text{, for
}i=1,2,...N,\\
S_{t}+\sum_{k=1}^{N}u_{k}\frac{\partial S}{\partial x_{k}}=0.
\end{array}
\right.  \label{scalar form}%
\end{equation}

For the Euler equations (\ref{scalar form}), $(\delta=0),$ with $\vec
{F}(t)=0,$ in radial symmetry:
\begin{equation}
\rho(t,\vec{x})=\rho(t,r)\text{ and }\vec{u}=\frac{\vec{x}}{r}V(t,r):=\frac
{\vec{x}}{r}V,
\end{equation}
with $r=\left(  \sum_{i=1}^{N}x_{i}^{2}\right)  ^{1/2}$,\newline there exists
a family of solutions for the Euler equations (\ref{scalar form}),
$(\delta=0),$ without the external force $(\vec{F}=0)$, for $\gamma>1$,
\cite{Li}%
\begin{equation}
\left\{
\begin{array}
[c]{c}%
\rho(t,r)=\left\{
\begin{array}
[c]{cc}%
\frac{^{y(r/a(t))^{1/(\gamma-1)}}}{a(t)^{N}}, & \text{ for }y(\frac{r}%
{a(t)})\geq0;\\
0, & \text{for }y(\frac{r}{a(t)})<0
\end{array}
\right.  ,\text{ }V(t,r)=\frac{\overset{\cdot}{a}(t)}{a(t)}r,\text{
}S(t,r)=\ln K,\\
\overset{\cdot\cdot}{a}(t)=\frac{-\lambda}{a(t)^{^{1+N(\gamma-1)}}},\text{
}a(0)=a_{0}>0,\text{ }\overset{\cdot}{a}(0)=a_{1},\\
y(x)=\frac{(\gamma-1)\lambda}{2\gamma K}x^{2}+\alpha^{\theta-1},
\end{array}
\right.  ;
\end{equation}
where $K$ is a positive number,\newline for $\gamma=1,$ \cite{Y2}%
\begin{equation}
\left\{
\begin{array}
[c]{c}%
\rho(t,r)=\frac{e^{y(r/a(t))}}{a(t)^{N}},V(t,r)=\frac{\overset{\cdot}{a}%
(t)}{a(t)}r,\text{ }S(t,r)=\ln K\\
\overset{\cdot\cdot}{a}(t)=\frac{-\lambda}{a(t)},a(0)=a_{0}>0,\overset{\cdot
}{a}(0)=a_{1},\\
y(x)=\frac{\lambda}{2K}x^{2}+\alpha,
\end{array}
\right.
\end{equation}
where $\lambda,$ $\alpha,$ $a_{0}$ and $a_{1}$ are constants.

The separation method for analytical solutions, were used to handle other
similar systems with single $\gamma$ functions:%
\begin{equation}
P(\rho)=e^{s}\rho^{\gamma},
\end{equation}
or without pressure function, in \cite{GW}, \cite{DXY}, \cite{Li}, \cite{M1},
\cite{Y}, \cite{Y1}, \cite{Y2}, \cite{Y4} and \cite{Y5}.

It is very natural to extend the results for the system with multiple $\gamma$
function (\ref{scalar form}):%
\begin{equation}
P(\rho)=e^{s}\sum_{j=1}^{m}\rho^{\gamma_{j}}.
\end{equation}
In this article, we have obtained a class of line solutions to the Euler
equations (\ref{scalar form}) $(\delta=0)$ and Euler-Poisson equations
(\ref{scalar form}) $(\delta=\pm1)$, with multiple Gamma function, in the
following theorems:

\begin{theorem}
\label{thm:1}For the Euler equations with multiple Gamma function,
$(\delta=0)$ (\ref{scalar form}), we have the family of the solutions,%
\begin{equation}
\left\{
\begin{array}
[c]{c}%
\rho(t,\vec{x})=f\left(  \overset{N}{\underset{i=1}{\Sigma}}C_{i}\left(
x_{i}-a_{i}(t)\right)  \right)  ,\\
\vec{u}(t,\vec{x})=(\dot{a}_{1}(t),\text{ }\dot{a}_{2}(t),...,\text{ }\dot
{a}_{N}(t)),\\
S(t,\vec{x})=\ln\left[  g\left(  \overset{N}{\underset{i=1}{\Sigma}}%
C_{i}\left(  x_{i}-a_{i}(t)\right)  \right)  \right]  ,
\end{array}
\right.  \label{yy1}%
\end{equation}
where%
\begin{equation}
\left\{
\begin{array}
[c]{c}%
\ddot{a}_{i}(t)=F_{i}(t)+C_{i}\xi,\text{ for }i=1,2,...N,\\
\text{ }a(0)=a_{0},\text{ }\dot{a}(0)=a_{1},
\end{array}
\right.
\end{equation}
(1)for $\gamma_{1}>1,$ with%
\begin{equation}
\left\{
\begin{array}
[c]{c}%
\xi+\dot{g}(z)\sum_{j=1}^{m}f\left(  z\right)  ^{\gamma_{j}-1}+g(z)\sum
_{j=1}^{m}\gamma_{j}f\left(  z\right)  ^{\gamma_{i}-2}\dot{f}\left(  z\right)
=0,\\
g(z)>0,\text{ for }z\in(-\infty,\infty),
\end{array}
\right.  ;
\end{equation}
where $C_{1},$ $C_{2},...,$ $C_{N},$ $\xi,$ $g_{0}$ and $g_{1}$ are arbitrary
constants; and $f\geq0$ is an arbitrary $C^{1}$ function;\newline(2)for
$\gamma_{1}=1$, with%
\begin{equation}
\left\{
\begin{array}
[c]{c}%
\xi+\rho C_{i}\dot{g}(z)+g(z)\frac{\dot{f}\left(  z\right)  }{f(z)}C_{i}%
+\dot{g}(z)\sum_{j=2}^{m}f\left(  z\right)  ^{\gamma_{j}-1}+g(z)\sum_{j=2}%
^{m}\gamma_{j}f\left(  z\right)  ^{\gamma_{i}-2}\dot{f}\left(  z\right)  =0,\\
g(z)>0,\text{ for }z\in(-\infty,\infty),
\end{array}
\right.
\end{equation}
$f>0$ is an arbitrary $C^{1}$ function.
\end{theorem}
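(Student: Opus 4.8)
The plan is to verify the ansatz (\ref{yy1}) by direct substitution into the scalar system (\ref{scalar form}) with $\delta=0$. First I would introduce the phase variable $z:=\sum_{i=1}^{N}C_{i}(x_{i}-a_{i}(t))$ and record the elementary identities $z_{t}=-\sum_{i=1}^{N}C_{i}\dot{a}_{i}(t)$ and $\partial z/\partial x_{k}=C_{k}$, so that for any $C^{1}$ function $h$ one has $\partial_{t}h(z)=-\dot{h}(z)\sum_{i}C_{i}\dot{a}_{i}(t)$ and $\partial_{x_{k}}h(z)=\dot{h}(z)C_{k}$. Since $\rho=f(z)$, $S=\ln g(z)$, and each velocity component $u_{k}=\dot{a}_{k}(t)$ depends on $t$ alone, all spatial derivatives of $u_{k}$ vanish; I would also note that the second-order ODE $\ddot{a}_{i}=F_{i}(t)+C_{i}\xi$ with $a(0)=a_{0}$, $\dot{a}(0)=a_{1}$ is solvable by two integrations since $F_{i}\in C^{0}$.

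Next I would dispose of the continuity and entropy equations. Substituting the ansatz into the first equation of (\ref{scalar form}) gives $-\dot{f}(z)\sum_{i}C_{i}\dot{a}_{i}(t)+\sum_{k}\dot{a}_{k}(t)\dot{f}(z)C_{k}+0$, which vanishes identically; the entropy equation reduces in the same way to $\frac{\dot{g}(z)}{g(z)}\big(-\sum_{i}C_{i}\dot{a}_{i}(t)+\sum_{k}\dot{a}_{k}(t)C_{k}\big)=0$, again identically zero. Hence the first and third equations impose no constraint on $f$ or $g$.

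The content lies in the momentum equations. With $u_{i}=\dot{a}_{i}(t)$ the convective term $\sum_{k}u_{k}\partial_{x_{k}}u_{i}$ drops out and $\partial_{t}u_{i}=\ddot{a}_{i}(t)$, while $\partial_{x_{i}}\big(e^{S}\sum_{j}\rho^{\gamma_{j}}\big)=C_{i}\big(\dot{g}(z)\sum_{j}f(z)^{\gamma_{j}}+g(z)\sum_{j}\gamma_{j}f(z)^{\gamma_{j}-1}\dot{f}(z)\big)$. Thus the $i$-th momentum equation becomes $f(z)\ddot{a}_{i}(t)+C_{i}\big(\dot{g}(z)\sum_{j}f(z)^{\gamma_{j}}+g(z)\sum_{j}\gamma_{j}f(z)^{\gamma_{j}-1}\dot{f}(z)\big)=f(z)F_{i}(t)$. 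Substituting $\ddot{a}_{i}(t)=F_{i}(t)+C_{i}\xi$ cancels the forcing and leaves $C_{i}\big(\xi f(z)+\dot{g}(z)\sum_{j}f(z)^{\gamma_{j}}+g(z)\sum_{j}\gamma_{j}f(z)^{\gamma_{j}-1}\dot{f}(z)\big)=0$ for every $i$; dividing by $C_{i}$, and where legitimate by $f(z)$, yields exactly the scalar profile equation claimed. I would then split into the two cases: when $\gamma_{1}>1$ all exponents $\gamma_{j}-1$ are positive, the relation extends across the vacuum set $\{f=0\}$, and $f\geq0$ is admissible; when $\gamma_{1}=1$ the $j=1$ contribution is linear in $f$, which forces $f>0$ and produces the second form of the profile equation after isolating that term.

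The only real obstacle is bookkeeping: one must confirm that the prescribed ODE for $a_{i}(t)$ is precisely what makes the external force $F_{i}(t)$ cancel against $\ddot a_i(t)$, and that dividing through by $f(z)$ (hence the dichotomy $\gamma_{1}>1$ versus $\gamma_{1}=1$ and the care needed near $\{f=0\}$) is carried out consistently. There is no analytic difficulty once the phase-variable calculus is set up; the verification is mechanical, and the freedom left in $f$, $g$, the $C_{i}$ and $\xi$ is exactly what the theorem records as arbitrary.
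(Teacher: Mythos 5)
Your proposal is correct and follows essentially the same route as the paper: introduce the phase variable $z=\sum_{i=1}^{N}C_{i}(x_{i}-a_{i}(t))$, check that the mass and entropy equations are satisfied identically (the paper isolates this as two lemmas), and reduce the $i$-th momentum equation, after using $\ddot{a}_{i}=F_{i}(t)+C_{i}\xi$ to cancel the forcing, to the stated profile ODE in $z$, with the $\gamma_{1}>1$ versus $\gamma_{1}=1$ split coming from dividing by $f(z)$. Your explicit care about the division by $f$ near the vacuum set is, if anything, slightly more scrupulous than the paper's own computation, but the argument is the same.
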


We notice that the velocity $\vec{u}(t,\vec{x})$ of the solutions (\ref{yy1})
and (\ref{yy2}), are only time-dependent functions:%
\begin{equation}
\vec{u}(t,\vec{x})=(\dot{a}_{1}(t),\dot{a}_{2}(t),...,\dot{a}_{N}(t)).
\end{equation}
It is different from the conventional velocity for the analytical solutions:%
\begin{equation}
\vec{u}(t,\vec{x})=\frac{\overset{\cdot}{a}(t)}{a(t)}\vec{x}.
\end{equation}

Moreover, we need some modification to have the corresponding results for the
Euler-Poisson equations.

\begin{theorem}
\label{thm:2}For the Euler-Poisson equations with multiple Gamma function,
$(\delta=\pm1),$ (\ref{scalar form}), we have the family of the solutions,%
\begin{equation}
\left\{
\begin{array}
[c]{c}%
\rho(t,\vec{x})=\frac{\overset{N}{\underset{i=1}{\Sigma}}C_{i}^{2}}{\alpha
(N)}\ddot{f}\left(  \overset{N}{\underset{i=1}{\Sigma}}C_{i}\left(
x_{i}-a_{i}(t)\right)  \right)  ,\\
\vec{u}(t,\vec{x})=(\dot{a}_{1}(t),\text{ }\dot{a}_{2}(t),...,\text{ }\dot
{a}_{N}(t)),\\
S(t,\vec{x})=\ln\left[  g\left(  \overset{N}{\underset{i=1}{\Sigma}}%
C_{i}\left(  x_{i}-a_{i}(t)\right)  \right)  \right]
\end{array}
\right.  \label{yy2}%
\end{equation}
where%
\begin{equation}
\left\{
\begin{array}
[c]{c}%
\ddot{a}_{i}(t)=F_{i}(t)+C_{i}\xi+d_{i}(t),\text{ for }i=1,2,...N,\\
\text{ }a(0)=a_{0},\text{ }\dot{a}(0)=a_{1},
\end{array}
\right.  \label{at}%
\end{equation}
(1)for $\gamma_{1}>1$, with%
\begin{equation}
\left\{
\begin{array}
[c]{c}%
\xi+\frac{\alpha(N)}{\overset{N}{\underset{i=1}{\Sigma}}C_{i}^{2}}\dot
{g}(z)\sum_{j=1}^{m}f\left(  z\right)  ^{\gamma_{j}-1}+\frac{\alpha
(N)}{\overset{N}{\underset{i=1}{\Sigma}}C_{i}^{2}}g(z)\sum_{j=1}^{m}\gamma
_{j}f\left(  z\right)  ^{\gamma_{i}-2}\dot{f}\left(  z\right)  +\delta\dot
{f}(z)=0,\\
g(z)>0,\text{ for }z\in(-\infty,\infty),
\end{array}
\right.  \label{entropyODE}%
\end{equation}
where $C_{1},$ $C_{2},...,$ $C_{N}$, $\xi,$ $g_{0}$ and $g_{1}$ are arbitrary
constants with $\overset{N}{\underset{i=1}{\Sigma}}C_{i}^{2}>0$; $\ddot{f}%
\geq0$ is an arbitrary $C^{3}$ function; and $d_{1},d_{2},....,d_{N}$ are
arbitrary $C^{0}$ functions;\newline(2)for $\gamma_{1}=1$, with%
\begin{equation}
\left\{
\begin{array}
[c]{c}%
\xi+\frac{\alpha(N)}{\overset{N}{\underset{i=1}{\Sigma}}C_{i}^{2}}\dot
{g}(z)+\frac{\alpha(N)}{\overset{N}{\underset{i=1}{\Sigma}}C_{i}^{2}}%
g(z)\frac{\dddot{f}\left(  z\right)  }{f(z)}\\
+\frac{\alpha(N)}{\overset{N}{\underset{i=1}{\Sigma}}C_{i}^{2}}\dot{g}%
(z)\sum_{j=2}^{m}\ddot{f}\left(  z\right)  ^{\gamma_{j}-1}+\frac{\alpha
(N)}{\overset{N}{\underset{i=1}{\Sigma}}C_{i}^{2}}g(z)\sum_{j=2}^{m}\gamma
_{j}\ddot{f}\left(  z\right)  ^{\gamma_{j}-2}\dddot{f}\left(  z\right)
+\delta\dot{f}(z)=0,\\
g(z)>0,\text{ for }z\in(-\infty,\infty),
\end{array}
\right.
\end{equation}
$\dddot{f}>0$ is an arbitrary $C^{1}$ function.
\end{theorem}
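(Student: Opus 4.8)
The plan is to substitute the ansatz (\ref{yy2}) into the scalar system (\ref{scalar form}) with $\delta=\pm1$ (together with the Poisson equation $\Delta\Phi=\alpha(N)\rho$) and to verify, one equation at a time, that it is a solution exactly when the displayed conditions on $a_i,\xi,f,g$ hold; the argument runs parallel to the Euler case of Theorem \ref{thm:1}, the new ingredients being the Poisson equation and the functions $d_i(t)$ in (\ref{at}). Write $z=\sum_{i=1}^{N}C_i\big(x_i-a_i(t)\big)$, so that $\partial_{x_k}z=C_k$ and $z_t=-\sum_i C_i\dot a_i(t)$. Since $\vec u=(\dot a_1(t),\dots,\dot a_N(t))$ is independent of $\vec x$, one has $\sum_k\partial_{x_k}u_k=0$, and for any $C^1$ function $h$,
\[
\partial_t h(z)+\sum_{k}u_k\,\partial_{x_k}h(z)=\dot h(z)\Big(-\sum_i C_i\dot a_i+\sum_k C_k\dot a_k\Big)=0 .
\]
Taking $h=\rho$ and $h=S=\ln g(z)$ shows that the continuity and entropy equations of (\ref{scalar form}) hold identically, for arbitrary profiles and with no constraint imposed — this is exactly the feature a line ansatz exploits.

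For the Poisson equation I would set
\[
\Phi(t,\vec x)=f(z)-\delta\sum_{i=1}^{N}d_i(t)\,x_i .
\]
The linear-in-$\vec x$ part is harmonic, so $\Delta\Phi=\big(\sum_i C_i^2\big)\ddot f(z)$, which equals $\alpha(N)\rho$ precisely when $\rho=\big(\sum_i C_i^2/\alpha(N)\big)\ddot f(z)$; this is how the form of $\rho$ in (\ref{yy2}) is forced. The choice of $\Phi$ is legitimate because the Poisson equation determines $\Phi$ only up to an additive harmonic function, and the linear correction is included precisely so that $\partial_{x_i}\Phi=C_i\dot f(z)-\delta d_i(t)$ absorbs the $d_i(t)$ appearing in (\ref{at}).

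It remains to substitute into the $i$-th momentum equation of (\ref{scalar form}). Because $u_i=\dot a_i(t)$, the convective term drops out and the inertial term is $\rho\ddot a_i(t)$; the pressure term is $\partial_{x_i}\big(g(z)\sum_j\rho^{\gamma_j}\big)=C_i\frac{d}{dz}\big(g(z)\sum_j\rho^{\gamma_j}\big)$; and $-\delta\rho\,\partial_{x_i}\Phi=-\delta\rho\,C_i\dot f(z)+\rho\,d_i(t)$ (using $\delta^2=1$). Inserting $\ddot a_i=F_i(t)+C_i\xi+d_i(t)$ makes the $\rho F_i(t)$ and $\rho d_i(t)$ terms cancel, and the momentum equation collapses to
\[
C_i\Big[\,\rho\,\xi+\frac{d}{dz}\Big(g(z)\sum_{j=1}^{m}\rho^{\gamma_j}\Big)+\delta\,\rho\,\dot f(z)\,\Big]=0,\qquad i=1,\dots,N .
\]
Since $\sum_i C_i^2>0$, at least one $C_i$ is nonzero (the indices with $C_i=0$ giving the trivial identity $0=0$), so the whole momentum system is equivalent to the bracket vanishing. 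Dividing by $\rho=\big(\sum_i C_i^2/\alpha(N)\big)\ddot f(z)$ and expanding the $z$-derivative — treating the leading summand separately according to whether $\gamma_1>1$ or $\gamma_1=1$, the latter producing the additional term in $\dddot f/\ddot f$ — yields exactly the ODE (\ref{entropyODE}) and, in the second case, its $\gamma_1=1$ analogue. The constants $C_i$ and $\xi$ are free, $f$ is free subject to the stated positivity and regularity conditions, $g$ is then obtained from the (first order, linear in $g$) ODE with its prescribed data, and $a_i$ is recovered by integrating (\ref{at}) twice using $a(0)=a_0$, $\dot a(0)=a_1$.

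Almost all of the above is routine verification. The step I expect to demand the most care is the handling of the Poisson term: one has to recognize the harmonic gauge freedom in $\Phi$ and use a linear-in-$\vec x$ correction to soak up the arbitrary functions $d_i(t)$, and one has to check that the reduction to the bracket identity remains valid at the points where $\rho$ vanishes, so that imposing the ODE for all $z$ genuinely recovers the momentum equation there. Keeping the $\gamma_1=1$ case separate throughout is likewise essential, since $\rho^{\gamma_1}=\rho$ behaves differently from the terms $\rho^{\gamma_j}$ with $\gamma_j>1$ once one divides by $\rho$.
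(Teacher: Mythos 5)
Your proposal is correct and follows essentially the same route as the paper: the line ansatz trivializes the mass and entropy equations, the potential is taken as $f(z)$ plus a harmonic, linear-in-$\vec{x}$ term whose Laplacian forces $\rho=\bigl(\sum_{i}C_i^{2}/\alpha(N)\bigr)\ddot{f}(z)$, and the momentum equations collapse to a single ODE in $z$ after the choice $\ddot{a}_i=F_i+C_i\xi+d_i$ cancels the forcing terms. The only differences are cosmetic: you put the factor $-\delta$ in the linear part of $\Phi$ so that (\ref{at}) holds literally (the paper uses $+\sum_i d_i(t)x_i$ and picks up $\delta d_i$, harmless since $d_i$ is arbitrary), and you rightly flag the division by $\rho$ at its zeros and the bookkeeping of $\ddot f$ versus $f$ in the final ODE, points the paper passes over silently.
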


\begin{remark}
The mass of the solutions (\ref{yy1}) for the Euler equations, and (\ref{yy2})
for the Euler-Poisson equations, in 1-dimensional case, is finite, if%
\begin{equation}
\int_{-\infty}^{\infty}f(z)dz<\infty,
\end{equation}
and
\begin{equation}
\int_{-\infty}^{\infty}\ddot{f}(z)dz<\infty,
\end{equation}
respectively.
\end{remark}

\section{Line Solutions}

With regard to the continuity equation of mass (\ref{Euler-Poisson11})$_{1}$,
we found the following solution structures of the below lemmas fit it well:

\begin{lemma}
For the mass equation:
\begin{equation}
{\normalsize \rho}_{t}{\normalsize +\nabla\cdot(\rho\vec{u})=}{\normalsize 0,}%
\end{equation}
there exist solutions,%
\begin{equation}
\rho(t,\vec{x})=f\left(  \overset{N}{\underset{i=1}{\Sigma}}C_{i}\left(
x_{i}-a_{i}(t)\right)  \right)  ,\text{ }{\normalsize u(t,\vec{x})=}\vec
{u}=(\dot{a}_{1}(t),\text{ }\dot{a}_{2}(t),...,\text{ }\dot{a}_{N}(t)),
\end{equation}
with the form arbitrary $f\geq0\in C^{1}$ and arbitrary $a_{i}(t)\in C^{1}.$

\begin{proof}
For the mass equation, we have
\begin{align}
&  \rho_{t}+\nabla\rho\cdot\vec{u}+\rho\nabla\cdot\vec{u}\\[0.1in]
&  =\frac{\partial}{\partial t}f\left(  \overset{N}{\underset{i=1}{\Sigma}%
}C_{i}\left(  x_{i}-a_{i}(t)\right)  \right)  +\nabla f\left(  \overset
{N}{\underset{i=1}{\Sigma}}C_{i}\left(  x_{i}-a_{i}(t)\right)  \right)
\cdot(\dot{a}_{1}(t),\text{ }\dot{a}_{2}(t),...,\text{ }\dot{a}_{N}%
(t))\\[0.1in]
&  =\dot{f}\left(  \overset{N}{\underset{i=1}{\Sigma}}C_{i}\left(  x_{i}%
-a_{i}(t)\right)  \right)  \left(  -\overset{N}{\underset{i=1}{\Sigma}}%
C_{i}\dot{a}_{i}(t)\right) \\[0.1in]
&  +\dot{f}\left(  \overset{N}{\underset{i=1}{\Sigma}}C_{i}\left(  x_{i}%
-a_{i}(t)\right)  \right)  \left(  C_{1}\dot{a}_{1}(t)+C_{2}\dot{a}%
_{2}(t)+....+C_{N}\dot{a}_{N}(t)\right) \\
&  =\dot{f}\left(  \overset{N}{\underset{i=1}{\Sigma}}C_{i}\left(  x_{i}%
-a_{i}(t)\right)  \right)  \left(  -\overset{N}{\underset{i=1}{\Sigma}}%
C_{i}\dot{a}_{i}(t)+\overset{N}{\underset{i=1}{\Sigma}}C_{i}\dot{a}%
_{i}(t)\right) \\
&  =0.
\end{align}
The proof is completed.
\end{proof}
\end{lemma}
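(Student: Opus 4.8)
The plan is to verify the claimed family by direct substitution into the continuity equation, exploiting the fact that the proposed velocity field is spatially constant. First I would introduce the shorthand $z = z(t,\vec{x}) := \sum_{i=1}^{N} C_i\left(x_i - a_i(t)\right)$, so that $\rho = f(z)$, and record the two elementary partial derivatives of $z$: $\partial_t z = -\sum_{i=1}^{N} C_i \dot{a}_i(t)$ and $\partial_{x_k} z = C_k$ for each $k$. By the chain rule this gives $\rho_t = \dot{f}(z)\,\partial_t z$ and $\nabla\rho = \dot{f}(z)\,(C_1,\dots,C_N)$, which is well defined precisely because $f\in C^1$ and $a_i\in C^1$; the hypothesis $f\ge 0$ plays no role in the PDE identity itself and is retained only so that $\rho$ is a genuine density.

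Next I would expand the divergence term by the product rule, $\nabla\cdot(\rho\vec{u}) = \nabla\rho\cdot\vec{u} + \rho\,\nabla\cdot\vec{u}$. The key observation is that $\vec{u} = (\dot{a}_1(t),\dots,\dot{a}_N(t))$ depends on $t$ alone, so $\nabla\cdot\vec{u} = \sum_{k=1}^N \partial_{x_k}\dot{a}_k(t) = 0$ and the second term drops out entirely. Hence $\nabla\cdot(\rho\vec{u}) = \nabla\rho\cdot\vec{u} = \dot{f}(z)\sum_{i=1}^N C_i \dot{a}_i(t)$.

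Finally I would add the two contributions: $\rho_t + \nabla\cdot(\rho\vec{u}) = \dot{f}(z)\left(-\sum_{i=1}^N C_i\dot{a}_i(t)\right) + \dot{f}(z)\left(\sum_{i=1}^N C_i\dot{a}_i(t)\right) = 0$, which is exactly $(\ref{Euler-Poisson11})_1$. The cancellation holds for every choice of the constants $C_i$ and every $C^1$ profile $f$ and paths $a_i(t)$, so no constraints are imposed on them. There is essentially no obstacle in this computation; the only point worth emphasising is the structural reason the two terms match — the coefficients $C_i$ that appear when differentiating the argument $z$ in $t$ are the same ones that appear when taking its spatial gradient and pairing it with $\vec{u}$ — which is exactly what makes this ansatz (and, after replacing $f$ by $\tfrac{1}{\alpha(N)}\sum_i C_i^2\,\ddot f$, its Euler--Poisson counterpart) satisfy the mass equation.
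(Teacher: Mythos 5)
Your proposal is correct and follows essentially the same route as the paper: substitute the ansatz, use the chain rule on $z=\sum_{i=1}^{N}C_{i}(x_{i}-a_{i}(t))$, and cancel $\rho_{t}$ against $\nabla\rho\cdot\vec{u}$. The only difference is cosmetic — you state explicitly that $\nabla\cdot\vec{u}=0$ because $\vec{u}$ depends on $t$ alone, a step the paper's computation leaves implicit when the term $\rho\,\nabla\cdot\vec{u}$ silently disappears after the first line.
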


Similarly, we have the corresponding lemma for the entropy equation
(\ref{Euler-Poisson11})$_{3}:$

\begin{lemma}
For the entropy equation:
\begin{equation}
S_{t}+{\normalsize \vec{u}\cdot\nabla S}=0,
\end{equation}
there exist solutions,%
\begin{equation}
S(t,\vec{x})=G\left(  \overset{N}{\underset{i=1}{\Sigma}}C_{i}\left(
x_{i}-a_{i}(t)\right)  \right)  ,\text{ }{\normalsize u(t,\vec{x})=}\vec
{u}=(\dot{a}_{1}(t),\text{ }\dot{a}_{2}(t),...,\text{ }\dot{a}_{N}(t)),
\end{equation}
with the form $f\geq0\in C^{1}$ and $a_{i}(t)\in C^{1}.$

\begin{proof}
For the mass equation, we can obtain:
\begin{align}
&  S_{t}+\vec{u}\cdot\nabla S\\[0.1in]
&  =\frac{\partial}{\partial t}G\left(  \overset{N}{\underset{i=1}{\Sigma}%
}C_{i}\left(  x_{i}-a_{i}(t)\right)  \right)  +(\dot{a}_{1}(t),\text{ }\dot
{a}_{2}(t),...,\text{ }\dot{a}_{N}(t))\cdot\nabla\dot{G}\left(  \overset
{N}{\underset{i=1}{\Sigma}}C_{i}\left(  x_{i}-a_{i}(t)\right)  \right)
\\[0.1in]
&  =\dot{G}\left(  \overset{N}{\underset{i=1}{\Sigma}}C_{i}\left(  x_{i}%
-a_{i}(t)\right)  \right)  \left(  -\overset{N}{\underset{i=1}{\Sigma}}%
C_{i}\dot{a}_{i}(t)\right) \\[0.1in]
&  +\dot{G}\left(  \overset{N}{\underset{i=1}{\Sigma}}C_{i}\left(  x_{i}%
-a_{i}(t)\right)  \right)  \left(  C_{1}\dot{a}_{1}(t)+C_{2}\dot{a}%
_{2}(t)+....+C_{N}\dot{a}_{N}(t)\right) \\
&  =0.
\end{align}
The proof is completed.
\end{proof}
\end{lemma}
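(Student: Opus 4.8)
The plan is to substitute the proposed form directly into the transport equation and verify it by a single application of the chain rule, in exact parallel with the preceding mass-equation lemma. Writing $z=\sum_{i=1}^{N}C_{i}(x_{i}-a_{i}(t))$ for the traveling-wave argument and $S(t,\vec{x})=G(z)$, the entire computation reduces to tracking how the time and space derivatives act on $z$, using only $G\in C^{1}$ and $a_{i}\in C^{1}$.

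First I would compute the time derivative. Since $\partial z/\partial t=-\sum_{i=1}^{N}C_{i}\dot{a}_{i}(t)$, the chain rule gives
\begin{equation}
S_{t}=\dot{G}(z)\left(  -\sum_{i=1}^{N}C_{i}\dot{a}_{i}(t)\right)  .
\end{equation}
Next I would compute the spatial gradient. Because $\partial z/\partial x_{k}=C_{k}$, the $k$-th component of $\nabla S$ is $\dot{G}(z)C_{k}$, so $\nabla S=\dot{G}(z)(C_{1},C_{2},\ldots,C_{N})$. Dotting against the velocity $\vec{u}=(\dot{a}_{1}(t),\ldots,\dot{a}_{N}(t))$ then yields $\vec{u}\cdot\nabla S=\dot{G}(z)\sum_{i=1}^{N}C_{i}\dot{a}_{i}(t)$. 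Adding the two contributions, the common factor $\dot{G}(z)$ multiplies $-\sum_{i}C_{i}\dot{a}_{i}(t)+\sum_{i}C_{i}\dot{a}_{i}(t)=0$, so $S_{t}+\vec{u}\cdot\nabla S=0$ identically, which is the claim.

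The conceptual point behind this cancellation is that $S=G(z)$ is constant along the lines moving with the spatially uniform velocity $\vec{u}$; these lines are precisely the characteristics of the transport equation $S_{t}+\vec{u}\cdot\nabla S=0$, and the velocity has been chosen so that the advective rate of change exactly matches the rate at which the profile $G$ is carried along $z$. There is no genuine obstacle here: it is a routine chain-rule verification. The one point worth flagging is that, unlike the mass equation, the entropy equation carries no $\rho\,\nabla\cdot\vec{u}$ term to absorb; but since $\vec{u}$ depends on $t$ alone, $\nabla\cdot\vec{u}=0$ anyway, so that term would vanish regardless, and both lemmas therefore rest on the identical cancellation $-\sum_{i}C_{i}\dot{a}_{i}(t)+\sum_{i}C_{i}\dot{a}_{i}(t)=0$.
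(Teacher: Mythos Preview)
Your proposal is correct and follows essentially the same approach as the paper: a direct chain-rule substitution showing that $S_{t}=\dot{G}(z)\bigl(-\sum_{i}C_{i}\dot{a}_{i}(t)\bigr)$ and $\vec{u}\cdot\nabla S=\dot{G}(z)\sum_{i}C_{i}\dot{a}_{i}(t)$ cancel. Your added remark about characteristics and the vanishing divergence is a nice conceptual gloss, but the computational content is identical.
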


The technique of constructing solutions is to deduce the partial differential
equations into ordinary differential equations only. Based on the above
lemmas, it is clear to check our solutions for the system.

\begin{proof}
[Proof of Theorem \ref{thm:1}]Our structure of the solutions (\ref{yy1}), fits
well for the mass equation (\ref{Euler-Poisson11})$_{1}$ and the entropy
equation (\ref{Euler-Poisson11})$_{3}$, from the above lemmas. For the $i$-th
momentum equation (\ref{scalar form})$_{2}$, we can define $z:=\overset
{N}{\underset{i=1}{\Sigma}}C_{i}\left(  x_{i}-a_{i}(t)\right)  ,$ to have
\begin{align}
&  \rho\left(  \frac{\partial u_{i}}{\partial t}+\sum_{k=1}^{N}u_{k}%
\frac{\partial u_{i}}{\partial x_{k}}\right)  +\frac{\partial}{\partial x_{i}%
}e^{S}\sum_{j=1}^{m}\rho^{\gamma_{j}}+\rho F_{i}(t)\\[0.1in]
&  =\rho\ddot{a}(t)+\frac{\partial}{\partial x_{i}}e^{\ln g(z)}\sum_{j=1}%
^{m}f\left(  z\right)  ^{\gamma_{j}}+\rho F_{i}(t)\\[0.1in]
&  =\rho\ddot{a}(t)+\left(  \frac{\partial}{\partial x_{i}}g\left(  z\right)
\right)  C_{i}\sum_{j=1}^{m}f\left(  z\right)  ^{\gamma_{j}}+g(z)\frac
{\partial}{\partial x_{i}}\left(  \sum_{j=1}^{m}f\left(  z\right)
^{\gamma_{j}}\right)  +\rho F_{i}(t)\\[0.1in]
&  =\rho\ddot{a}(t)+\rho C_{i}\dot{g}(z)\sum_{j=1}^{m}f\left(  z\right)
^{\gamma_{j}-1}+g(z)\sum_{j=1}^{m}\gamma_{j}f\left(  z\right)  ^{\gamma_{j}%
-1}\dot{f}\left(  z\right)  C_{i}+\rho F_{i}(t)\\[0.1in]
&  =\rho C_{i}\left\{  \xi+\dot{g}(z)\sum_{j=1}^{m}f\left(  z\right)
^{\gamma_{j}-1}+g(z)\sum_{j=1}^{m}\gamma_{j}f\left(  z\right)  ^{\gamma_{i}%
-2}\dot{f}\left(  z\right)  \right\} \\[0.1in]
&  =0,
\end{align}
where we require the following ordinary differential equations:%
\begin{equation}
\left\{
\begin{array}
[c]{c}%
\ddot{a}_{i}(t)=F_{i}(t)+C_{i}\xi,\text{ for }i=1,2,...N,\\
a(0)=a_{0},\text{ }\dot{a}(0)=a_{1},\\
\xi+\dot{g}(z)\sum_{j=1}^{m}f\left(  z\right)  ^{\gamma_{j}-1}+g(z)\sum
_{j=1}^{m}\gamma_{j}f\left(  z\right)  ^{\gamma_{i}-2}\dot{f}\left(  z\right)
=0,\\
g(z)>0,\text{ for }z\in(-\infty,\infty).
\end{array}
\right.
\end{equation}

For $\gamma_{1}=1,$ we have,%
\begin{align}
&  \rho\left(  \frac{\partial u_{i}}{\partial t}+\sum_{k=1}^{N}u_{k}%
\frac{\partial u_{i}}{\partial x_{k}}\right)  +\frac{\partial}{\partial x_{i}%
}\left(  e^{S}\rho+e^{S}\sum_{j=2}^{m}\rho^{\gamma_{j}}\right)  +\rho
F_{i}(t)\\[0.1in]
&  =\rho\ddot{a}(t)+\frac{\partial}{\partial x_{i}}\left(  e^{\ln g(z)}%
\rho\right)  +\frac{\partial}{\partial x_{i}}\left(  e^{\ln g(z)}\sum
_{j=2}^{m}f\left(  z\right)  ^{\gamma_{j}}\right)  +\rho F_{i}(t)\\[0.1in]
&  =\rho\ddot{a}(t)+\left(  \frac{\partial}{\partial x_{i}}g\left(  z\right)
\right)  f(z)+g(z)\frac{\partial}{\partial x_{i}}f(z)\\[0.1in]
&  +\left(  \frac{\partial}{\partial x_{i}}g\left(  z\right)  \right)
C_{i}\sum_{j=2}^{m}f\left(  z\right)  ^{\gamma_{j}}+g(z)\frac{\partial
}{\partial x_{i}}\left(  \sum_{j=2}^{m}f\left(  z\right)  ^{\gamma_{j}%
}\right)  +\rho F_{i}(t)\\
&  =\rho\ddot{a}(t)+\rho C_{i}\dot{g}(z)f\left(  z\right)  +g(z)\dot{f}\left(
z\right)  C_{i}\\[0.1in]
&  +\rho C_{i}\dot{g}(z)\sum_{j=2}^{m}f\left(  z\right)  ^{\gamma_{j}%
-1}+g(z)\sum_{j=2}^{m}\gamma_{j}f\left(  z\right)  ^{\gamma_{j}-1}\dot
{f}\left(  z\right)  C_{i}+\rho F_{i}(t)\\
&  =\rho C_{i}\left\{  \xi+\rho C_{i}\dot{g}(z)+g(z)\frac{\dot{f}\left(
z\right)  }{f(z)}C_{i}+\dot{g}(z)\sum_{j=2}^{m}f\left(  z\right)  ^{\gamma
_{j}-1}+g(z)\sum_{j=2}^{m}\gamma_{j}f\left(  z\right)  ^{\gamma_{i}-2}\dot
{f}\left(  z\right)  \right\} \\[0.1in]
&  =0,
\end{align}
with
\begin{equation}
\left\{
\begin{array}
[c]{c}%
\xi+\rho C_{i}\dot{g}(z)+g(z)\frac{\dot{f}\left(  z\right)  }{f(z)}C_{i}%
+\dot{g}(z)\sum_{j=2}^{m}f\left(  z\right)  ^{\gamma_{j}-1}+g(z)\sum_{j=2}%
^{m}\gamma_{j}f\left(  z\right)  ^{\gamma_{i}-2}\dot{f}\left(  z\right)  =0,\\
g(z)>0,\text{ for }z\in(-\infty,\infty).
\end{array}
\right.
\end{equation}
The proof is completed.
\end{proof}

With similar analysis, we can derive the corresponding theorem for the
Euler-Poisson equations.

\begin{proof}
[Proof of Theorem \ref{thm:2}]We can use the previous lemmas again to handle
the mass equation and entropy equation again. Then, we can definite the
potential function as:
\begin{equation}
\Phi(t,\vec{x})=f(z)+\overset{N}{\sum_{i=1}}d_{i}(t)x_{i}, \label{potential}%
\end{equation}
where $z:=\overset{N}{\underset{i=1}{\Sigma}}C_{i}\left(  x_{i}-a_{i}%
(t)\right)  .$

We differentiate (\ref{potential}) twice to obtain:
\begin{equation}
\Delta\Phi(t,\vec{x})=\ddot{f}(z)\overset{N}{\underset{i=1}{\Sigma}}C_{i}%
^{2}=\alpha(N)\rho,
\end{equation}
where in our solution%
\begin{equation}
\rho=\frac{\overset{N}{\underset{i=1}{\Sigma}}C_{i}^{2}}{\alpha(N)}\ddot
{f}(z).
\end{equation}
For the $i$-th momentum equation (\ref{scalar form})$_{2}$, we have:%
\begin{align}
&  \rho\left(  \frac{\partial u_{i}}{\partial t}+\sum_{k=1}^{N}u_{k}%
\frac{\partial u_{i}}{\partial x_{k}}\right)  +\frac{\partial}{\partial x_{i}%
}\left(  e^{S}\sum_{j=1}^{m}\rho^{\gamma_{j}}\right)  +\delta\rho
\frac{\partial}{\partial x_{i}}\Phi+\rho F_{i}(t)\\
&  =\rho\dot{a}_{it}(t)+\frac{\partial}{\partial x_{i}}\left(  e^{\ln g\left(
z\right)  }\sum_{j=1}^{m}\ddot{f}\left(  z\right)  ^{\gamma_{j}}\right)
+\delta\rho\frac{\partial}{\partial x_{i}}\left(  f(z)+\overset{N}{\sum_{i=1}%
}d_{i}(t)x_{i}\right)  +\rho F_{i}(t)\\
&  =\rho\ddot{a}_{i}(t)+C_{i}\dot{g}\left(  z\right)  \sum_{j=1}^{m}\ddot
{f}\left(  z\right)  ^{\gamma_{j}}+g(z)\frac{\partial}{\partial x_{i}}%
\sum_{j=1}^{m}f\left(  z\right)  ^{\gamma_{j}}+\delta\rho\left(  \dot
{f}(z)C_{i}+d_{i}(t)\right)  +\rho F_{i}(t)\\
&  =\rho\ddot{a}_{i}(t)+\rho C_{i}\dot{g}(z)\sum_{j=1}^{m}\ddot{f}\left(
z\right)  ^{\gamma_{j}}+g(z)\sum_{j=1}^{m}\gamma_{j}\ddot{f}\left(  z\right)
^{\gamma_{j}-1}\dddot{f}\left(  z\right)  C_{i}\\
&  +\delta\rho\left(  \dot{f}(z)C_{i}+d_{i}(t)\right)  +\rho F_{i}(t)\\
&  =\rho C_{i}\left\{  \xi+\frac{\alpha(N)}{\overset{N}{\underset{i=1}{\Sigma
}}C_{i}^{2}}\dot{g}(z)\sum_{j=1}^{m}\ddot{f}\left(  z\right)  ^{\gamma_{j}%
-1}+\frac{\alpha(N)}{\overset{N}{\underset{i=1}{\Sigma}}C_{i}^{2}}%
g(z)\sum_{j=1}^{m}\gamma_{j}\ddot{f}\left(  z\right)  ^{\gamma_{j}-2}\dddot
{f}\left(  z\right)  +\delta\dot{f}(z)\right\} \\
&  =0,
\end{align}
where we require the following ordinary differential equations:%
\begin{equation}
\left\{
\begin{array}
[c]{c}%
\ddot{a}_{i}(t)=F_{i}(t)+C_{i}\xi+\delta d_{i}(t),\text{ for }i=1,2,...N,\\
a(0)=a_{0},\text{ }\dot{a}(0)=a_{1},\\
\xi+\frac{\alpha(N)}{\overset{N}{\underset{i=1}{\Sigma}}C_{i}^{2}}\dot
{g}(z)\sum_{j=1}^{m}f\left(  z\right)  ^{\gamma_{j}-1}+\frac{\alpha
(N)}{\overset{N}{\underset{i=1}{\Sigma}}C_{i}^{2}}g(z)\sum_{j=1}^{m}\gamma
_{j}f\left(  z\right)  ^{\gamma_{i}-2}\dot{f}\left(  z\right)  +\delta\dot
{f}(z)=0,\\
g(z)>0,\text{ for }z\in(-\infty,\infty).
\end{array}
\right.
\end{equation}

For $\gamma_{1}=1,$ we get,%
\begin{align}
&  \rho\left(  \frac{\partial u_{i}}{\partial t}+\sum_{k=1}^{N}u_{k}%
\frac{\partial u_{i}}{\partial x_{k}}\right)  +\frac{\partial}{\partial x_{i}%
}e^{S}\sum_{j=1}^{m}\rho^{\gamma_{j}}+\delta\rho\frac{\partial}{\partial
x_{i}}\Phi+\rho F_{i}(t)\\
&  =\rho\dot{a}_{it}(t)+\frac{\partial}{\partial x_{i}}\left(  e^{\ln g\left(
z\right)  }\ddot{f}\left(  z\right)  \right)  +\frac{\partial}{\partial x_{i}%
}\left(  e^{\ln g\left(  z\right)  }\sum_{j=2}^{m}\ddot{f}\left(  z\right)
^{\gamma_{j}}\right) \\
&  +\delta\rho\frac{\partial}{\partial x_{i}}\left(  f(z)+\overset{N}%
{\sum_{i=1}}d_{i}(t)x_{i}\right)  +\rho F_{i}(t)\\
&  =\rho\ddot{a}_{i}(t)+C_{i}\dot{g}\left(  z\right)  \ddot{f}\left(
z\right)  +g(z)\frac{\partial}{\partial x_{i}}f\left(  z\right) \\
&  +C_{i}\dot{g}\left(  z\right)  \sum_{j=2}^{m}\ddot{f}\left(  z\right)
^{\gamma_{j}}+g(z)\frac{\partial}{\partial x_{i}}\sum_{j=2}^{m}f\left(
z\right)  ^{\gamma_{j}}+\delta\rho\left(  \dot{f}(z)C_{i}+d_{i}(t)\right)
+\rho F_{i}(t)\nonumber\\
&  =\rho\ddot{a}_{i}(t)+C_{i}\dot{g}(z)\ddot{f}\left(  z\right)
+g(z)\dddot{f}\left(  z\right)  C_{i}+\\
&  \rho C_{i}\dot{g}(z)\sum_{j=2}^{m}\ddot{f}\left(  z\right)  ^{\gamma_{j}%
}+g(z)\sum_{j=2}^{m}\gamma_{j}\ddot{f}\left(  z\right)  ^{\gamma_{j}-1}%
\dddot{f}\left(  z\right)  C_{i}+\delta\rho\left(  \dot{f}(z)C_{i}%
+d_{i}(t)\right)  +\rho F_{i}(t)\\
&  =\rho C_{i}\left\{
\begin{array}
[c]{c}%
\xi+\frac{\alpha(N)}{\overset{N}{\underset{i=1}{\Sigma}}C_{i}^{2}}\dot
{g}(z)+\frac{\alpha(N)}{\overset{N}{\underset{i=1}{\Sigma}}C_{i}^{2}}%
g(z)\frac{\dddot{f}\left(  z\right)  }{f(z)}\\
+\frac{\alpha(N)}{\overset{N}{\underset{i=1}{\Sigma}}C_{i}^{2}}\dot{g}%
(z)\sum_{j=2}^{m}\ddot{f}\left(  z\right)  ^{\gamma_{j}-1}+\frac{\alpha
(N)}{\overset{N}{\underset{i=1}{\Sigma}}C_{i}^{2}}g(z)\sum_{j=2}^{m}\gamma
_{j}\ddot{f}\left(  z\right)  ^{\gamma_{j}-2}\dddot{f}\left(  z\right)
+\delta\dot{f}(z)
\end{array}
\right\} \\
&  =0,
\end{align}
where we require the following ordinary differential equations:%
\begin{equation}
\left\{
\begin{array}
[c]{c}%
\xi+\frac{\alpha(N)}{\overset{N}{\underset{i=1}{\Sigma}}C_{i}^{2}}\dot
{g}(z)+\frac{\alpha(N)}{\overset{N}{\underset{i=1}{\Sigma}}C_{i}^{2}}%
g(z)\frac{\dddot{f}\left(  z\right)  }{f(z)}+\frac{\alpha(N)}{\overset
{N}{\underset{i=1}{\Sigma}}C_{i}^{2}}\dot{g}(z)\sum_{j=2}^{m}\ddot{f}\left(
z\right)  ^{\gamma_{j}-1}\\
+\frac{\alpha(N)}{\overset{N}{\underset{i=1}{\Sigma}}C_{i}^{2}}g(z)\sum
_{j=2}^{m}\gamma_{j}\ddot{f}\left(  z\right)  ^{\gamma_{j}-2}\dddot{f}\left(
z\right)  +\delta\dot{f}(z)=0,\\
g(z)>0\text{, for }z\in(-\infty,\infty).
\end{array}
\right.
\end{equation}
The proof is completed.
\end{proof}

\begin{remark}
The ordinary differential equations (\ref{at}),%
\begin{equation}
\left\{
\begin{array}
[c]{c}%
\ddot{a}_{i}(t)=F_{i}(t)+C_{i}\xi+d_{i}(t),\text{ for }i=1,2,...N,\\
\text{ }a(0)=a_{0},\text{ }\dot{a}(0)=a_{1},
\end{array}
\right.
\end{equation}
are solved by%
\begin{equation}
a_{i}(t)=\int_{0}^{t}\int_{0}^{s}\left(  F_{i}(\eta)+d_{i}(\eta)\right)  d\eta
ds+\frac{C_{i}\xi t^{2}}{2}+a_{1}t+a_{0}.
\end{equation}

\end{remark}

\begin{remark}
In fact, our solutions, (\ref{yy1}) and (\ref{yy2}), can be easily extended to
the systems, with the generalized multiple and nonlinear, damping and Gamma
pressure function:%
\begin{equation}
\left\{
\begin{array}
[c]{c}%
\frac{\partial\rho}{\partial t}+\sum_{k=1}^{N}u_{k}\frac{\partial\rho
}{\partial x_{k}}+\rho\sum_{k=1}^{N}\frac{\partial u_{k}}{\partial x_{k}%
}{\normalsize =}{\normalsize 0,}\\
\rho\left(  \frac{\partial u_{i}}{\partial t}+\sum_{k=1}^{N}u_{k}%
\frac{\partial u_{i}}{\partial x_{k}}\right)  +\sum_{l=1}^{n}\beta_{l}%
\rho\left[  \left(  \sum_{i=1}^{N}u_{i}^{2}\right)  ^{1/2}\right]  ^{p_{l}%
-1}u_{i}+\frac{\partial}{\partial x_{i}}\left(  e^{S}\sum_{j=1}^{m}\lambda
_{i}\rho^{\gamma_{j}}\right) \\
{\normalsize =-\rho}\frac{\delta\partial}{\partial x_{i}}\Phi+\rho
F_{i}(t)\text{, for }i=1,2,...N,\\
S_{t}+\sum_{k=1}^{N}u_{k}\frac{\partial S}{\partial x_{k}}=0,
\end{array}
\right.
\end{equation}
where $\beta_{1},\beta_{2},....,\beta_{n};\lambda_{1},\lambda_{2},....$
,$\lambda_{m};$ and $p_{1},p_{2},...p_{n}\geq0,$ are constants.
\end{remark}

\end{document}